\newcommand{\LCP}[0]{\mathsf{LCP}}
\newcommand{\SA}[0]{\mathsf{SA}}
\newcommand{\SF}[1]{S_{#1}}
\theoremstyle{plain}
\newtheorem{definition}{Definition}
\newtheorem{lemma}[definition]{Lemma}
\theoremstyle{remark}
\title{Inducing the LCP-Array}
\author{
  Johannes Fischer\thanks{Computer Science Department, Karlsruhe
    University, \texttt{johannes.fischer@kit.edu}}
}
\date{}
\begin{document}

\maketitle

\begin{abstract}
We show how to modify the linear-time construction algorithm for suffix arrays based on \emph{induced sorting} (Nong et al., DCC'09) such that it computes the array of \emph{longest common prefixes} (LCP-array) as well. Practical tests show that this outperforms recent LCP-array construction algorithms (Gog and Ohlebusch, ALENEX'11).
\end{abstract}

\section{Introduction}
The \emph{suffix array} is an important data structure in text indexing. It is used to solve many tasks in string processing, from exact and inexact string matching to more involved tasks such as data compression, repeat recognition, and text mining. It is also the basic building block for the more complex text index called the \emph{suffix tree}, either indirectly for index construction, or directly when dealing with \emph{compressed} suffix trees. In all of the above applications (possibly apart from exact string matching), the suffix array is accompanied by its sister-array, the array of longest common prefixes (LCP-array for short).

Since their introduction in the early 1990's, much research has been devoted to the fast construction of suffix arrays. Although it is in principle possible to derive the suffix array from the suffix tree, for which linear-time algorithms had already been discovered earlier \cite{weiner73linear}, for reasons of time and space the aim was to construct the suffix array \emph{directly}, without help of the tree. This long line of research (see \cite{puglisi07taxonomy} for a good reference) culminated in three linear-time algorithms \cite{kaerkkaeinen06linear,ko05space,kim05constructing}. However, these algorithms were notorious for being ``linear but not fast'' \cite{antonitio04new}, as they were slower than other non-linear algorithms that had been discovered before and continued to be discovered afterwards.

This un-satisfactory situation (at least for theoretical practioners or practical theoreticians, who want linear-time algorithms to perform faster than super-linear ones) changed substantially when in 2009 a new linear-time algorithm based on \emph{induced sorting} was presented \cite{nong09linear}. A careful implementation of this approach due to Yuta Mori led to one of the fastest known suffix array construction algorithms, often outperforming all other linear or super-linear implementations.

Less emphasis has been put on the efficient construction of the LCP-array. Manber and Myers \cite{manber93suffix} mentioned that it can be constructed along with their method for constructing the suffix array, but their algorithm ran in $O(n\lg n)$ time and performed rather poor in practice. Kasai et al.~\cite{kasai01linear} gave an elegant algorithm for constructing the LCP-array in linear time. A few refinements of this algorithm led to improvements in either space \cite{manzini04two} or in running time \cite{kaerkkaeinen09permuted}. However, these algorithms could not compete with the carefully tuned algorithms for suffix arrays. This led to the odd situation that the rather difficult task of \emph{sorting} suffixes could be solved faster than the seemingly simpler task of computing longest common prefixes.

This situation changed only recently when a theoretically slow $O(n^2)$ but practically fast LCP-array construction algorithm was presented \cite{gog11fast}. Their algorithm exploits properties of the Burrows-Wheeler-Transformation (BWT) of the text, which must be computed before. The authors of \cite{gog11fast} also sketch how their approach yields a linear-time algorithm (for constant alphabets, otherwise it takes $O(n\lg\sigma)$ time).

Driven by the success of the fast linear-time algorithm based on induced sorting \cite{nong09linear}, we show in this paper how it can be adapted such that it also induces the LCP-values (Sect.~\ref{sect:inducing_lcp}). This results in a new linear-time algorithm for constructing LCP-arrays (for integer alphabets). In Sect.~\ref{sect:practical} we show that an ad-hoc implementation of the theoretical ideas leads to a fast practical algorithm that outperforms all other previous algorithms. An additional advantage of our algorithm is that it does not rely on the BWT, and is hence preferable in situations where the BWT is not already present (such as compressed suffix arrays \emph{not} based on the BWT \cite{navarro07compressed}, for example).

Before detailing our theoretical and practical contributions, in Sect.~\ref{sect:preliminaries} we first introduce some notations, and then review the induced sorting algorithm for suffix arrays.

\section{Previous Work and Concepts}
\label{sect:preliminaries}
\subsection{Suffix- and LCP-Arrays}
\label{sec:succ-data-struct}
Let $T=t_1\dots t_n$ be a text consisting of $n$ characters drawn from an ordered alphabet $\Sigma$ of size $\sigma = |\Sigma|$. The substring of $T$ ranging from $i$ to $j$ is denoted by $T_{i..j}$, for $1\le i \le j \le n$. The substring $T_{i..n}$ is called the $i$'th \emph{suffix} of $T$ and is denoted by $S_i$. As usual, for convenience we assume that $T$ ends in a unique character $\$$ which is not present elsewhere in the text, and that $\$<a$ for all $a\in\Sigma$.

The \emph{suffix array} $\SA[1,n]$ of $T$ is a permuation of the integers in $[1,n]$ such that $\SF{\SA[i-1]} <_\mathrm{lex} \SF{\SA[i]}$ for all $1 < i \le n$. In other words, $\SA$ describes the lexicographic order of the suffixes. 

The array $\LCP$ of \emph{longest common prefixes} is based on the suffix array. It holds the lengths of the longest common prefixes of lexicographically adjacent suffixes, in symbols: $\LCP[i] = \max\{\ell>0\mid T_{\SA[i]..\SA[i]+\ell-1 = \SA[i-1]..\SA[i-1]+\ell-1}\}$ for $1 < i \le n$, and $\LCP[0]=0$. 

\subsection{Constructing Suffix Arrays by Induced Sorting}
\label{sect:inducing_sa}
As the basis of our new LCP-array construction algorithm is the \emph{induced sorting} algorithm for constructing suffix arrays \cite{nong09linear}, we explain this latter algorithm in the following. Induced sorting has a venerable history in suffix sorting, see \cite{itoh99efficient, seward00performance, ko05space}. Its basic idea is to sort a certain subset of suffixes, either directly or recursively, and then use this result to \emph{induce} the order of the remaining suffixes. In the rest of this section, we follow the presentation of Okanohara and Sadakane \cite{okanohara09linear}.

\begin{definition}
  For $1 \le i < n$, suffix $\SF{i}$ is said to be \emph{S-type} if $\SF{i} <_\mathrm{lex} \SF{i+1}$, and \emph{L-type} otherwise. The last suffix is defined to be S-type.
\end{definition}

The type of each suffix can be determined in linear time by a right-to-left scan of $T$: first, $\SF{n}$ is declared as S-type. Then, for every $i$ from $n-1$ to $1$, $\SF{i}$ is classified by the following rule:
\vspace{-5ex}
\begin{center}
\item $\SF{i}$ is S-type iff either $t_i < t_{i+1}$, or $t_i = t_{i+1}$ and $\SF{i+1}$ is S-type.
\end{center}

We further say that an S-suffix $\SF{i}$ is of \emph{type S*} iff $\SF{i-1}$ is of type L.

In $\SA$, all suffixes starting with the same character $c\in\Sigma$ form a consecutive interval, called the \emph{$c$-bucket} henceforth. Oberve that in any $c$-bucket, the L-suffixes precede the S-suffixes. Consequently, we can sub-divide buckets into S-type buckets and L-type buckets.

Now the induced sorting algorithm can be explained as follows:

\begin{enumerate}
\item Sort the S*-suffixes. This step will be explained in more detail below.
\item Put the sorted S*-suffixes into their corresponding S-buckets, without changing their order.
\item Induce the order of the L-suffixes by scanning $\SA$ from left to right: for every position $i$ in $\SA$, if $\SF{\SA[i]-1}$ is L-type, write $\SA[i]-1$ to the current head of the L-type $c$-bucket ($c=t_{\SA[i]-1}$), and increase the current head of that bucket by one. Note that this step can only induce ``to the right'' (the current head of the $c$-bucket is larger than $i$).
\item Induce the order of the S-suffixes by scanning $\SA$ from \emph{right to left}: for every position $i$ in $\SA$, if $\SF{\SA[i]-1}$ is S-type, write $\SA[i]-1$ to the current \emph{end} of the S-type $c$-bucket ($c=t_{\SA[i]-1}$), and \emph{de}crease the current end of that bucket by one. Note that this step can only induce ``to the left,'' and might intermingle S-suffixes with S*-suffixes.
\end{enumerate}

It remains to explain how the S*-suffixes are sorted (step 1 above). To this end, we define:

\begin{definition}
  An \emph{S*-substring} is a substring $T_{i..j}$ with $i\ne j$ of $T$ such that both $\SF{i}$ and $\SF{j}$ are S*-type, but no suffix in between $i$ and $j$ is also of type S*.
\end{definition}

Let $R_1,R_2,\dots,R_{n'}$ denote these S*-substrings, and $\sigma'$ be the number of different S*-substrings. We assign a \emph{name} $v_i \in [1,\sigma']$ to any such $R_i$, such that $v_i < v_j$ if $R_i <_\mathrm{lex} R_j$ and $v_i = v_j$ if $R_i =_\mathrm{lex} R_j$. We then construct a new text $T'=v_1\dots v_{n'}$ over the alphabet $[1,\sigma']$, and build the suffix array $\SA'$ of $T'$ by applying the inducing sorting algorithm \emph{recursively} to $T'$ if $\sigma'<n'$ (otherwise there is nothing to sort). The crucial property \cite{nong09linear} to observe here is that the order of the suffixes in $T'$ is the same as the order of the respective S*-suffixes in $T$; hence, $\SA'$ determines the sorting of the S*-suffixes in $T$. Further, as at most every second suffix in $T$ can be of type S*, the complete algorithm has worst-case running time $T(n) = T(n/2)+O(n) = O(n)$, provided that the \emph{naming} of the S*-substrings also takes linear time, which is what we explain next.

The naming of the S*-substrings is similar to the inducing of the S-suffixes in the induced sorting algorithm (steps 2--4 above), with the difference that in step 2 we put the \emph{unsorted} S*-suffixes into their corresponding buckets (hence they are only sorted according to their first character). Steps 3 and 4 work exactly as described above. At the end of step 4, we can assign names to the S*-substrings by comparing adjacent S*-suffixes naively until we find a mismatch or reach their end; this takes overall linear time.

\section{Inducing the LCP-Array}
\label{sect:inducing_lcp}
We now explain how the induced sorting algorithm (Sect.~\ref{sect:inducing_sa}) can be modified to also compute the LCP-array. The basic idea is that whenever we place two S- or L-suffixes $\SF{i-1}$ and $\SF{j-1}$ at adjacent places $k-1$ and $k$ in the final suffix array (steps 3 and 4 in the algorithm), the length of their longest common prefix can be induced from the longest common prefix of the suffixes $\SF{i}$ and $\SF{j}$. As the latter suffixes are exactly those that caused the inducing of $\SF{i-1}$ and $\SF{j-1}$, we already know their LCP-value $\ell$ (by the order in which we fill $\SA$), and can hence set $\LCP[k]$ to $\ell+1$.

\subsection{Basic Algorithm}
We now describe the algorithm in more detail. We augment the steps of the induced sorting algorithm as follows:

\begin{enumerate}[1$'$.]
\item Compute the LCP-values of the S*-suffixes (see Sect.~\ref{sect:recursive_lcp}).
\item Whenever we place an S*-suffix into its S-bucket, we also store its LCP-value at the corresponding position in $\LCP$.
\item Suppose that the inducing step just put suffix $\SF{\SA[i]-1}$ into its L-type $c$-bucket at position $k$. If $\SF{\SA[i]-1}$ is the first suffix in its L-bucket, we set $\LCP[k]$ to $0$. Otherwise, suppose further that in a previous iteration $i'<i$ the inducing step placed suffix $\SF{\SA[i']-1}$ at $k-1$ in the same $c$-bucket. Then if $i'$ and $i$ are in different buckets, the suffixes $\SF{\SA[i]}$ and $\SF{\SA[i']}$ start with different characters, and we set $\LCP[k]$ to $1$, as the suffixes $\SF{\SA[i]-1}$ and $\SF{\SA[i']-1}$ share only a common character $c$ at their beginnings. Otherwise ($i'$ and $i$ are in the same $c'$-bucket), the length $\ell$ of the longest common prefix of the suffixes $\SF{\SA[i]}$ and $\SF{\SA[i']}$ is given by the \emph{minimum} value in $\LCP[i'+1,i]$, all of which are in the same $c'$-bucket and have therefore already been computed in previous iterations. We can hence set $\LCP[k]$ to $\ell+1$. 
\item As in the previous step, suppose that the inducing step just put suffix $\SF{\SA[i]-1}$ into its S-type $c$-bucket at position $k$. Suppose further that in a previous iteration $i'>i$ the inducing step placed suffix $\SF{\SA[i']-1}$ at $k+1$ in the same $c$-bucket (if $k$ is the last position in its S-bucket, we skip the following steps). Then if $i'$ and $i$ are in different buckets, their suffixes start with different characters, and we set $\LCP[k+1]$ to $1$, as the suffixes $\SF{\SA[i]-1}$ and $\SF{\SA[i']-1}$ share only a common character $c$ at their beginnings. Otherwise ($i'$ and $i$ are in the same $c'$-bucket), the length $\ell$ of the longest common prefix of the suffixes $\SF{\SA[i]}$ and $\SF{\SA[i']}$ is given by the \emph{minimum} value in $\LCP[i+1,i']$, all of which are in the same $c'$-bucket and have therefore already been computed. We can hence set $\LCP[k+1]$ to $\ell+1$. 
\end{enumerate}

\subsection{Finding Minima}
\label{sect:rmq}
To find the minimum value in $\LCP[i'+1,i]$ or $\LCP[i+1,i']$ (steps 3$'$ and 4$'$ above), we have several alternatives. The simplest idea is to scan the whole interval from $i'+1$ to $i$; this results in overall $O(n^2)$ running time. A better alternative would be to keep an array $M$ of size $\sigma$, such that the minimum is always given by $M[c]$ if we induce an LCP-value in bucket $c$. To keep $M$ up-to-date, after each step $i$ we first set $M[c]$ to $\LCP[i]$, and further update all other entries in $M$ that are larger than $\LCP[i]$ by $\LCP[i]$; this approach has $O(n\sigma)$ running time. A further refinement of this technique stores the values in $M$ in sorted order and uses binary search on $M$ to find the minima, similar to the stack used by \cite{gog11fast}. This results in overall $O(n\lg\sigma)$ running time.

Yet, we can also update the minima in $O(1)$ amortized running time, as explained next. Let us first focus on the left-to-right scan (step 3$'$); we will comment on the differences to the right-to-left scan (step 4$'$) at the end of this section. Recall that the queries lie within a single bucket (called $c'$), and every bucket is subdivided into an L- and an S-bucket. The idea is to also subdivide the query into an L- and an S-query, and return the minimum of the two. The S-queries are simple to handle: in step 3$'$, only S*-suffixes will be scanned, and these are static. Hence, we can preprocess every S-type bucket with a static data structure for constant-time range minima, using overall linear space \cite[Thm.~1]{fischer10optimal}. The L-queries are more difficult, as elements keep being written to them during the scan. However, these updates occur in a very regular fashion, namely in a left-to-right manner. This makes the problem simpler: we maintain a \emph{Two-Dimensional Min-Heap} \cite[Def.~2]{fischer10optimal} $\mathcal{M}_{c'}$ for each bucket $c'$, which is initially empty (no L-suffixes written so far). When a new L-suffix along with LCP-value $\ell+1$ is written into its $c'$-bucket, we climb up the rightmost path of $\mathcal{M}_{c'}$ until we find an element $x$ whose corresponding array-entry is strictly smaller than $\ell+1$ ($\mathcal{M}_{c'}$ has an artificial root holding LCP-value $-\infty$ which guarantees that such an element always exists). The new element is then added as $x$'s new rightmost leaf; an easy amortized argument shows that this results in overall linear time. Further, $\mathcal{M}_{c'}$ is stored along with a data structure for constant-time \emph{lowest common ancestor queries} (LCAs) which supports dynamic leaf additions in $O(1)$ worst-case time \cite{cole05dynamic}. Then the minimum in any range in the processed portion of the L-bucket can be found in $O(1)$ time \cite[Lemma~2]{fischer10optimal}. \footnote{Note that it is important to use the Two-Dimensional Min-Heap rather than the usual Cartesian Tree for achieving overall linear time, for the following reason: Although the Cartesian Tree also has $O(1)$ amortized update-time for the operation ``append at end;'' it also needs to relink entire subtrees, rather than only inserting new leaves to the rightmost path \cite{gabow84scaling}. For the relink-operation, no constant-time solutions exist for maintaining $O(1)$-LCAs in the tree (not even in an amortized sense); the best solution we are aware of takes $\alpha(\cdot,n)$ update time \cite{harel84fast}, $\alpha(\cdot,\cdot)$ being the inverse Ackermann function.}

In the right-to-left scan (step 4$'$), the roles of the L- and S-buckets are reversed: the L-buckets are static and the S-buckets dynamic. For the former, we already have the range minimum data structures from the left-to-right scan (the 2d-Min-Heaps together with LCA). For the S-buckets, we now build an additional 2d-Min-Heap along with dynamic LCAs; this works because the S-buckets are filled in a strict right-to-left manner.

What we have described in the preceding two paragraphs was actually more general than what we really needed: a solution to the \emph{semi-dynamic range minimum query problem} with constant $O(1)$ query- and amortized $O(1)$ insertion-time, with the restriction that new elements can only be appended at the end (or beginning, respectively) of the array. Our solution might also have interesting applications in other problems. In our setting, though, the problem is slightly more specific: the sizes of the arrays to be prepared for RMQs are known in advance (namely the sizes of the buckets); hence, we can use any of the (more practical) preprocessing-schemes for (static) RMQs in $O(1)$ worst-case time\cite{fischer07new,alstrup04nearest}, and update the respective structures, which are essentially precomputed RMQs over suitably-sized blocks, whenever enough elements have arrived.

\subsection{Computing LCP-Values of S*-suffixes}
\label{sect:recursive_lcp}
This section describes how to compute the LCP-values of the suffixes in the sample set (step 1$'$ above). The recursive call to compute the suffix array $\SA'$ for the text $T'$ (the text formed by the names of the S*-substrings) also yields the LCP-array $\LCP'$ for $T'$. The problem is that these LCP-values refer to characters $v_i$ in the reduced alphabet $[1,\sigma']$, which correspond to S*-substrings $R_i$ in $T$. Hence, we need to ``scale'' every LCP-value in $\LCP'$ by the lengths of the actual S*-substrings that constitute this longest common prefix: a value $\LCP'[k]$ refers to the substring $v_{\SA[k]}\dots v_{\SA[k]+\LCP'[k]-1}$ of $T'$, and actually implies an LCP-value of $\sum_{i=0}^{\LCP[k]-1}|R_{\SA[k]+i}|$ between the corresponding S*-suffixes in $T$.

A naive implementation of this calculation could again result in $O(n^2)$ running time, consider the text $T=\mathtt{abab}\dots \mathtt{ab}$. However, we can make use of the fact that the suffixes of $T'$ appear lexicographically ordered in $T'$: when ``scaling'' $\LCP'[k]$, we know that the first $m = \min(\LCP[k-1],\LCP[k])$ S*-substrings match, and can hence compute the actual LCP-value as
$$
\sum_{i=0}^{\LCP[k]-1}|R_{\SA[k]+i}| = \underbrace{\sum_{i=0}^{m-1}|R_{\SA[k]+i}|}_{\text{already computed}} + \sum_{i=m}^{m-1}|R_{\SA[k]+i}|\ .
$$
This way, by an amortized argument it is easy to see that each character in $T$ contributes to at most 2 additions, resulting in an overall $O(n)$ running time.

It is possible to stop the recursive LCP-calculation at a certain depth and use any other LCP-array construction algorithm on the remaining (sparse) set of sorted suffixes.

\subsection{Computing LCP-values at the L/S-Seam}
There is one subtlety in the above inducing algorithm we have withheld so far, namely that of computing the LCP-values between the last L-suffix and the first S-suffix in a given $c$-bucket (we call this position the \emph{L/S-seam}). More precisely, when reaching an L/S-seam in step 3$'$, we have to re-compute the LCP-value between the first S*-suffix in the $c$-bucket (if it exists) and the last L-suffix in the same $c$-bucket (the one that we just induced), in order to induce correct LCP-values when stepping through the S*-suffixes in subsequent iterations. Likewise, when placing the very first S-suffix in its $c$-bucket in step 4$'$, we need to compute the LCP-value between this induced S-suffix and the largest L-suffix in the same $c$-bucket. (Note that step 4 might place an S-suffix before all S*-suffixes, so we cannot necessarily re-use the LCP-value computed at the L/S-seam in step 3$'$.)

The following lemma shows that the LCP-computation at L/S-seams is particularly easy:

\begin{lemma}
  \label{lemma:ls_seam}
  Let $\SF{i}$ be an L-suffix, $\SF{j}$ an S-suffix, and $t_i = c = t_j$ (the suffixes are in the same $c$-bucket in $\SA$). Further, let $\ell\ge 1$ denote the length of the longest common prefix of $\SF{i}$ and $\SF{j}$. Then
  $$
  T_{i\dots i+\ell - 1} = c^\ell = T_{j\dots j+\ell - 1}\ .
  $$
\end{lemma}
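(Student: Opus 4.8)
The plan is to show that the common prefix of $\SF{i}$ and $\SF{j}$ cannot contain any character other than $c$; combined with $t_i = t_j = c$ this is exactly the claim. The whole argument hinges on understanding where the initial run of $c$'s terminates in an L-suffix versus an S-suffix, so the first task is to pin that down using the recursive type-classification rule from Section~\ref{sect:inducing_sa}.

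First I would establish the following characterization. Consider any suffix $\SF{m}$ with $t_m = c$, and let $p \ge 1$ be the length of the maximal run of $c$'s starting at $m$, i.e.\ $t_m = \dots = t_{m+p-1} = c$ and $t_{m+p} \ne c$ (such a terminating position exists because $T$ ends in the sentinel $\$ < c$). Within the run we have $t_{m'} = t_{m'+1} = c$ for all $m \le m' < m+p-1$, so the classification rule forces $\SF{m'}$ and $\SF{m'+1}$ to receive the same type; hence $\SF{m}, \SF{m+1}, \dots, \SF{m+p-1}$ all share the type of $\SF{m+p-1}$. The type of $\SF{m+p-1}$ is in turn decided by comparing $t_{m+p-1} = c$ with $t_{m+p} \ne c$: it is L-type exactly when $c > t_{m+p}$ and S-type exactly when $c < t_{m+p}$. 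Consequently, an L-suffix starting with $c$ has its maximal $c$-run immediately followed by a character strictly smaller than $c$, while an S-suffix starting with $c$ has it followed by a character strictly larger than $c$.

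With this in hand I would argue by contradiction. Suppose the common prefix $T_{i..i+\ell-1} = T_{j..j+\ell-1}$ contains some character different from $c$, and let $k$ be the first position at which this happens; since $t_i = t_j = c$, both suffixes then begin with exactly $k-1$ copies of $c$ followed by the \emph{same} character $d = t_{i+k-1} = t_{j+k-1} \ne c$, where $k-1 < \ell$. In particular the maximal $c$-run of $\SF{i}$ has length $k-1$, and so does that of $\SF{j}$. But the characterization above applied to the L-suffix $\SF{i}$ yields $d < c$, whereas applied to the S-suffix $\SF{j}$ it yields $d > c$ — a contradiction. Hence the common prefix contains only the character $c$, which is precisely $T_{i..i+\ell-1} = c^\ell = T_{j..j+\ell-1}$.

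The one delicate point, and the step I would treat most carefully, is the type-propagation through the run of $c$'s together with the boundary comparison at position $m+p$: this is exactly where the L/S distinction is converted into the two opposite inequalities $d < c$ and $d > c$ that collide in the contradiction. Everything else is bookkeeping, and the well-definedness of the run (the existence of a terminating $t_{m+p} \ne c$) is guaranteed by the sentinel $\$ < c$ at the end of $T$.
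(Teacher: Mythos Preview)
Your proof is correct and follows essentially the same route as the paper's own argument: assume some non-$c$ character occurs in the common prefix, look at the first such position, and derive contradictory inequalities on that character from the L-type of $\SF{i}$ versus the S-type of $\SF{j}$. The paper compresses the whole thing into three lines, whereas you spell out explicitly the type-propagation through the maximal $c$-run and the boundary comparison; this extra care is exactly what justifies the step the paper leaves implicit.
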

\begin{proof}
  Assume that $t_{i+k} = c' = t_{i+k}$ for some $2\le k < \ell$ and $c'\ne c$. Then if $c'<c$, both $\SF{i}$ and $\SF{j}$ are of type L, and otherwise ($c'>c$), they are both of type S. In any case, this is a contradiction to the assumption that $\SF{i}$ is of type L, and $\SF{j}$ of type S.
\end{proof}

In words, the above lemma states that the longest common prefix at the L/S-seam can only consist of equal characters. Therefore, a \emph{naive} computation of the LCP-values at the L/S-seam is sufficient to achieve overall linear running time: every character $t_i$ contributes at most to the computation at the L/S-seam in the $t_i$-bucket, and not in any other $c$-bucket for $c\ne t_i$.

\section{Experimental Results}
\label{sect:practical}
We implemented the algorithm from the previous section in C and ran several tests on an AMD Athlon 64 processor, running at 2200 MHz with a 512KB L2-cache and 4GB of main memory. The basis of our implementation was Yuta Mori's linear-time C-implementation of the induced-sorting algorithm \cite{nong09linear}, called \textsf{sais-lite} version 2.4.1 (\url{http://sites.google.com/site/yuta256/sais}). We made the following implementation decisions: instead of calculating the LCP-values of the S*-suffixes recursively, we used a sparse variant of the $\Phi$-algorithm \cite{kaerkkaeinen09permuted} immediately on the first level, which calculates the LCP-values of the S*-suffixes in overall linear time. For the inducing step, we used the simple $O(n\sigma)$-variant described in Sect.~\ref{sect:rmq}. The resulting algorithm is called \textsf{inducing} henceforth.

We compared our implementation to the following LCP-array construction algorithms:
\begin{description}
\item \textsf{KLAAP}: the original linear-time method for constructing LCP \cite{kasai01linear}, implemented in a space-saving variant \cite{manzini04two}.
\item $\Phi$: the $\Phi$-algorithm of K\"arkk\"ainen et al.~\cite{kaerkkaeinen09permuted}, which is a clever variant of \textsf{KLAAP} that avoids cache-misses by reorganizing the computations.
\item \textsf{GO}: the hybrid algorithm as described by \cite{gog11fast}. It needs the Burrows-Wheeler-Transformation (BWT) for LCP-array construction, and computes small LCP-values naively, from which the larger LCP-values are deduced.
\item \textsf{GO2}: a semi-external variant of \textsf{GO} \cite{gog11fast}.
\item \textsf{naive}: for a sanity check, we also included the \emph{naive} computation of the LCP-array (step through the suffix array and compare corresponding suffixes naively).
\end{description}
We used the implementations from the \emph{succinct data structures library} (sdsl 0.9.0) \cite{gog11fast} wherever possible. All programs were compiled using the same compiler options (-ffast-math -O9 -funroll-loops -DNDEBUG).

We chose the test suite from \url{http://pizzachili.dcc.uchile.cl/} for evaluation, which is by now a de-facto standard. It includes texts from natural languages (English), biology (dna and proteins), and structured documents (dblp.xml and sources). Because the authors of \cite{gog11fast} point out that the human chromosome 22 from Manzini's corpus (hs) is a particular hard case for some algorithms, it was also included.

The results are shown in Tbl.~\ref{tbl:results}. The first block of columns shows the running times for pure LCP-array construction. For \textsf{KLAAP} and $\Phi$, these times include construction of the inverse suffix- and the $\Phi$-array, respectively, as they are needed for LCP-array computation. For \textsf{GO} and \textsf{GO2}, the times for computing the BWT are \emph{not} included; the reason is that in some cases the BWT is also needed for other purposes, so it might already be in memory. As \textsf{inducing} is inherently coupled with SA-construction \cite{nong09linear}, we could not measure its running times for pure LCP-array construction directly; the figures in column ``\textsf{inducing}'' of Tbl.~\ref{tbl:results} are hence \emph{obtained} by first running the \emph{pure} SA-construction (\textsf{sais-lite}), then the combined LCP- and SA-construction, and finally taking the difference of both running times. Measured this way, \textsf{inducing} takes always less time than all other methods.

\begin{table}
\begin{center}
\begin{tabular}{r|r||c|c|c|c|c|c||c|c|c||c|c|c|}
& & \multicolumn{6}{c||}{pure LCP-array construction} & \multicolumn{2}{c|}{SA} & BWT & \multicolumn{3}{c|}{\textbf{SA+LCP}}\\
& & \rotatebox{90}{\textsf{KLAAP} \cite{kasai01linear}}& \rotatebox{90}{$\Phi$ \cite{kaerkkaeinen09permuted}} & \rotatebox{90}{\textsf{GO} \cite{gog11fast}} & \rotatebox{90}{\textsf{GO2} \cite{gog11fast}} & \rotatebox{90}{\textsf{naive}} & \rotatebox{90}{\textsf{inducing}$^{(*)}$ [this paper]} & \rotatebox{90}{\textsf{divsufsort}} & \rotatebox{90}{\textsf{sais-lite}} & & \rotatebox{90}{\textsf{GO}+BWT+\textsf{divsufsort}} & \rotatebox{90}{\textsf{naive}+\textsf{divsufsort}} & \rotatebox{90}{\textsf{inducing}+\textsf{sais-lite}} \\\hline\hline
\multirow{5}{*}{\rotatebox{90}{20MB}}
& dna      & 7.1 & 6.3 & 4.2 & 6.2 & 3.1 & 2.6 & 4.9 & 6.8 & 2.6 & 11.7 & \textbf{8.0} & 9.4 \\\cline{2-14}
& english  & 6.3 & 5.5 & 9.9 & 12.2&132.8& 2.8 & 4.9 & 6.5 & 2.6 & 17.4 &137.7& \textbf{9.3} \\\cline{2-14}
& dblp.xml & 5.4 & 5.0 & 4.1 & 6.0 & 3.5 & 2.7 & 4.0 & 5.4 & 2.5 & 10.6 & \textbf{7.5} & 8.1 \\\cline{2-14}
& sources  & 5.1 & 5.0 & 4.5 & 6.7 & 3.8 & 2.5 & 3.5 & 5.4 & 2.2 & 10.2 & \textbf{7.3} & 7.9 \\\cline{2-14}
& proteins & 6.0 & 5.6 & 7.6 & 9.8 & 11.4& 2.5 & 5.1 & 7.3 & 2.5 & 15.2 & 16.5& \textbf{9.8} \\\hline\hline
\multicolumn{2}{r||}{hs (33MB)}&10.8 & 12.2& 6.6 &10.0 & 4.3 & 4.4 & 8.2 & 11.0& 4.3 & 19.1 & \textbf{12.5}& 15.4 \\\hline\hline
\multirow{5}{*}{\rotatebox{90}{50MB}}
& dna      & 20.6& 18.0& 10.8& 16.0& 8.3 & 7.1 & 14.1& 18.1& 6.9 & 31.8 & \textbf{22.4} & 25.2 \\\cline{2-14}
& english  & 18.0& 16.0& 21.2&26.5 &193.7& 7.9 & 13.3& 18.1& 6.8 & 41.3 &207.0 & \textbf{26.0} \\\cline{2-14}
& dblp.xml & 15.1& 14.0& 10.7&15.8 & 9.2 & 6.8 & 11.1& 14.2& 6.3 & 28.1 & \textbf{20.3} & 21.0 \\\cline{2-14}
& sources  & 14.4& 13.6& 14.7& 20.2& 17.7& 6.5 & 9.7 & 14.4& 5.8 & 30.2 & 27.4 & \textbf{20.9} \\\cline{2-14}
& proteins & 19.1& 17.1& 15.4& 20.7& 18.6& 7.1 & 15.7& 22.3& 6.7 & 37.8 & 34.3 & \textbf{29.4} \\\hline\hline
\multirow{5}{*}{\rotatebox{90}{100MB}}
& dna      & 47.0 & 41.1 & 22.0 & 32.9 & 17.4 & 16.2 & 32.1 & 39.1& 14.9 & 69.0 & \textbf{49.5} & 55.3 \\\cline{2-14}
& english  & 40.8 & 36.3 & 38.9 & 49.6 & 547.0& 17.5 & 29.8 & 39.8& 14.5 & 83.2 &576.8 & \textbf{57.3} \\\cline{2-14}
& dblp.xml & 32.0 & 30.0 & 21.7 & 31.9 & 19.5 & 14.8 & 24.1 & 29.6& 13.1 & 58.9 & \textbf{43.6} & 44.4 \\\cline{2-14}
& sources  & 30.3 & 28.6 & 28.2 & 38.7 & 109.4& 13.8 & 20.9 & 30.2& 12.2 & 61.3 &130.3 & \textbf{44.0} \\\cline{2-14}
& proteins & 43.5 & 38.6 & 35.9 & 46.7 & 49.3 & 16.2 & 35.3 & 48.7& 14.5 & 85.7 & 84.6 & \textbf{64.9} \\\hline\hline
\multirow{5}{*}{\rotatebox{90}{200MB}}
& dna      & 104.4& 92.7 & 46.1 & 68.5 & 51.0 & 36.3 & 75.9 & 87.6& 32.7 & 154.7 & 126.9 & \textbf{123.9} \\\cline{2-14}
& english  & 90.7 & 80.9 & 82.3 &104.3 &3190.5& 39.4 & 68.9 & 88.8& 31.6 & 182.8 &3259.4 & \textbf{128.2} \\\cline{2-14}
& dblp.xml & 69.2 & 64.6 & 44.1 & 64.3 & 40.4 & 31.1 & 53.2 & 63.6& 27.4 & 124.7 & \textbf{93.6}  & 94.7  \\\cline{2-14}
& sources  & 65.9 & 62.0 & 58.7 & 79.9 & 141.5& 29.5 & 46.4 & 65.3& 26.0 & 131.1 & 187.9 & \textbf{94.8}  \\\cline{2-14}
& proteins & 91.6 & 82.9 & 82.0 & 105.0& 124.2& 35.6 & 76.5 &104.0& 30.8 & 189.3 & 200.7 & \textbf{139.6} \\\hline
\end{tabular}
{\footnotesize $^{(*)}$ As \textsf{inducing} is inherently coupled with SA-construction (\textsf{sais-lite} in our implementation), the running times for pure LCP-array construction were calculated by taking the difference of ``\textsf{inducing}+\textsf{sais-lite}'' and ``\textsf{sais-lite}.''}
\end{center}
\caption{Running times (in seconds) for LCP- and suffix-array construction. The first block of columns shows the running times for pure LCP-array construction (for KLAAP and $\Phi$, these times include construction of the inverse suffix- and the $\Phi$-array, respectively). The second block shows the construction times of those arrays that need to be constructed before LCP: SA (always) and BWT (for \textsf{GO} and \textsf{GO2}). The third block shows the overall running times for computing both SA and LCP for the best possible combinations of algorithms.}
\label{tbl:results}
\end{table}

A fairer comparison of the algorithms is shown in the last three columns of Tbl.~\ref{tbl:results}, where the combined running times for SA- and LCP-array construction are given (for a selection of the best-performing LCP-algorithms). This is because all other methods for LCP-array construction are independent of the method for constructing SA, and can hence be combined with faster SA-construction algorithms. It is by now widely agreed that Yuta Mori's \textsf{divsufsort} in version 2.0.1 is the fastest known such algorithm (\url{http://code.google.com/p/libdivsufsort/}). Hence, for methods \textsf{GO} and \textsf{naive} we give the overall running times combined with \textsf{divsufsort}, whereas for \textsf{inducing} we give the overall running time of \textsf{sais-lite}, adapted to induce LCP-values as well. Further, for \textsf{GO} we also add the times to compute the BWT, as it is needed for LCP-array construction.

Inspecting the results from Tbl.~\ref{tbl:results}, we see that \textsf{inducing}+\textsf{sais-lite} is usually the best possible combination, sometimes outperformed by \textsf{naive}+\textsf{divsufsort}. In fact, the naive algorithm is rather competitive (especially for small inputs up to 50MB), apart from the English text, which consists of long repetitions of the same texts (and hence has large average LCP).

\section{Conclusions and Outlook}
We showed how the LCP-array can be induced along with the suffix array. A rather ad-hoc implementation outperformed all state-of-the-art algorithms. We point out the following potentials for practical improvements: (1) As suffix- and LCP-values are always written to the same place, an interleaved storage of $\SA$ and $\LCP$ could result in fewer cache misses. (2) As the faster \textsf{divsufsort} is also based on induced sorting, incorporating our ideas into that algorithm could result in better overall performance. (3) Computing the LCP-values of the S*-suffixes recursively up to a certain (well-chosen) depth could be faster than just using the $\Phi$-algorithm on level 0, as in our implementation.

\subsection*{Acknowledgments}
We thank Moritz Kobitzsch for help on programming, and Peter Sanders for interesting discussions.

\bibliographystyle{abbrv}
\bibliography{paper}

\end{document}